\documentclass[10pt,letterpaper,conference]{IEEEtran}

\usepackage[T1]{fontenc}
\usepackage{amsmath,amssymb,amsfonts,amsthm}
\usepackage{graphicx,epsfig,psfrag,color}
\usepackage{times}
\usepackage{dsfont}
\usepackage{algorithm}
\usepackage[noend]{algorithmic}

\newcommand{\bi}{\begin{itemize}}
\newcommand{\ei}{\end{itemize}}
\newcommand{\bal}{\begin{align}}
\newcommand{\eal}{\end{align}}
\newcommand{\EE}{\mathbb{E}}

\DeclareMathOperator*{\argmax}{arg\,max}
\DeclareMathOperator*{\argmin}{arg\,min}

\newtheorem{theorem}{Theorem}
\newtheorem{lemma}{Lemma}

\newtheorem{note}{Note}

\newtheorem{proposition}{Proposition}

\begin{document}

\vspace{-.1in}
\title{\huge{From the Information Bottleneck to the Privacy Funnel}}

\vspace{-.1in}
 \author{\IEEEauthorblockN{Ali Makhdoumi}
\IEEEauthorblockA{
MIT\\
Cambridge, MA\\
makhdoum@mit.edu}\\
\and
\IEEEauthorblockN{Salman Salamatian}
\IEEEauthorblockA{
EPFL\\
Lausanne, Switzerland\\
salman.salamatian@epfl.ch}\\
\and
\IEEEauthorblockN{Nadia Fawaz}
\IEEEauthorblockA{Technicolor\\
Palo Alto, CA\\
nadia.fawaz@technicolor.com}\\
\and
\IEEEauthorblockN{Muriel M{\'e}dard}
\IEEEauthorblockA{\
MIT\\
Cambridge, MA\\
medard@mit.edu}
}

\maketitle
\begin{abstract}
We focus on the privacy-utility trade-off encountered by users who wish to disclose some information to an analyst, that is correlated with their private data, in the hope of receiving some utility. 
We rely on a general privacy statistical inference framework, under which data is transformed before it is disclosed, according to a probabilistic privacy mapping.
We show that when the log-loss is introduced in this framework in both the privacy metric and the distortion metric, the privacy leakage and the utility constraint can be reduced to the mutual information between private data and disclosed data, and between non-private data and disclosed data respectively. 
We justify the relevance and generality of the privacy metric under the log-loss by proving that the inference threat under any bounded cost function can be upperbounded by an explicit function of the mutual information between private data and disclosed data.
We then show that the privacy-utility tradeoff under the log-loss can be cast as the non-convex \emph{Privacy Funnel} optimization, and we leverage its connection to the Information Bottleneck, to provide a greedy algorithm that is locally optimal. We evaluate its performance on the US census dataset.
%
\end{abstract}

\section{Introduction}

We consider a setting in which users have two kinds of data, that are correlated: some data that each user would like to remain private and some non-private data that he is willing to disclose to an analyst and from which he will derive some utility.
The analyst is a legitimate receiver of the disclosed data, which he
will use to provide utility to the user, but can also adversarially
exploit it to infer the user's private data. 
This creates a tension between privacy and utility requirements. 
To reduce the inference threat on private data while maintaining utility, each user's non-private
data is transformed before it is disclosed, according to a probabilistic privacy mapping.
The design of the privacy mapping should balance the tradeoff between the utility of the disclosed data, and the privacy of the private data: it should keep the disclosed transformed data as much informative as possible about the non-private data, while leaking as little information as possible about the private data. 

The framework for privacy against inference attacks in \cite{du2012privacy} proposes to design the privacy mapping as the solution to an optimization minimizing the inference threat subject to a utility constraint. Our approach relies on this framework, and makes the following two  contributions.
%
First, we show that when the log-loss is introduced in this framework in both the privacy metric and the distortion metric, the privacy leakage reduces to the mutual information between private data and disclosed data, while the utility requirement is modeled by the mutual information between non-private data and disclosed data. We justify the relevance and generality of the privacy metric under the log-loss by proving that the inference threat, defined in \cite{du2012privacy} as the inference cost gain, under any bounded cost function can be upperbounded by an explicit function of the mutual information between private data and disclosed data. We then show that the privacy-utility tradeoff under the log-loss can be cast as the \emph{Privacy Funnel} optimization, and study its connection to the Information Bottleneck \cite{tishby2000information}.
%
Second, for general distributions, the privacy funnel optimization being a non-convex problem, we provide a greedy algorithm  for the Privacy Funnel that is locally optimal by leveraging connections to the Information Bottleneck method \cite{aggloinfobottleneck,tishby2000information}, and evaluate its performance on real-world data.
%
	
%

\noindent\textbf{Related Work}: Several works, such as \cite{yamamoto1983source,agrawal2000privacy,dwork2006calibrating,dwork2006differential,Sankar-Poor-IFStrans2013}, have studied the issue of keeping some information private while disclosing some correlated information, by distorting the information disclosed. Differential privacy \cite{dwork2006calibrating,dwork2006differential} was introduced to answer queries on statistical databases in a privacy-preserving manner, by minimizing the chances of identification of the database records.
One line of work in information theoretic privacy \cite{yamamoto1983source,Sankar-Poor-IFStrans2013} studies the trade-off between
privacy and utility, where they consider expected distortion as a measure of utility and equivocation as a measure of privacy. \cite{Sankar-Poor-IFStrans2013} focus mainly on collective privacy for all or subsets of the entries of a database, and provide fundamental and asymptotic results on the rate-distortion-equivocation region as the number of data samples grows arbitrarily large. These approaches are different from our approach in three ways as we do not consider a communication problem where the rate needs to be bounded, 
and we use the average amount of bits as a measure of both utility and privacy (log-loss distortion or mutual information).  
The wire-tap channel, introduced in \cite{wyner1975wire}, focuses on designing
the encoder and decoder to release information and protect private information from an eavesdropper, when utility is measured in terms of error probability of the decoded message, and secrecy is measured in terms of normalized mutual information.  
Our setting differs from the wire-tap channel, as it does not involve a third-party eavesdropper, but the analyst is both a legitimate receiver of the disclosed data, and a potential adversary as it can use it to try to infer private data. Moreover, we focus on the privacy mapping design (channel design), with different measures of privacy and utility. 

The log-loss distortion  has been studied in \cite{courtade2011multiterminal} as a measure of distortion in the context of multi-terminal source coding. Log-loss as measure of distortion is also studied in  \cite{harremoes2007information} where they show that log-loss satisfies certain properties that leads to the Information Bottleneck method \cite{tishby2000information}. Finally, for an overview of the central role of the log-loss distortion in prediction, we refer the reader to \cite{merhav1998universal}.

\noindent\textbf{Outline}: In Section~\ref{sec:problemformandcharac}, we
introduce the privacy-utility trade-off against Inference attacks. In Section \ref{sec:Funnel}, we describe the privacy funnel method and show   properties of log-loss metric, and then characterize the privacy-disclosure trade-off as the privacy  funnel optimization.  In Section~\ref{sec:alg}, we provide a greedy algorithm to design the privacy mapping and
evaluate it on real-world data.

\noindent \textbf{Notations}: Throughout the paper, $X$ denotes a random variable over alphabet $\mathcal{X}$ with distribution $P_X$.
All random variables are assumed to be discrete, unless mentioned otherwise.


\section{Privacy-Utility against Inference attacks}
\label{sec:problemformandcharac}
In this background section, we first describe the setting, and the privacy and utility metrics introduced in the framework for privacy against inference attacks in \cite{du2012privacy}. Then, we recall how the privacy-utility trade-off can be cast into an optimization.

\vspace{-.05in}
\subsection{Setting}
\vspace{-.05in}


We consider a setting 
where a user has some private data, represented by the random variable $S \in \mathcal{S}$, which is correlated with some non-private data $X\in \mathcal{X}$, that the user wishes to share with an analyst. The correlation between $S$ and $X$ is captured by the joint distribution $P_{S, X}$. Due to this correlation, releasing $X$ to the analyst would enable him to draw some inference on the private data $S$.
To reduce the inference threat on $S$ that would arise from the observation of $X$, rather than releasing $X$, the user releases a distorted version of $X$ denoted by $Y\in \mathcal{Y}$. The distorted data $Y$ is generated by passing $X$ through a conditional distribution $P_{Y|X}$, called the privacy mapping. Throughout the paper, we assume $S \to X \to Y$ form a Markov chain.
Therefore, once the distribution $P_{Y|X}$ is fixed, the joint distribution $P_{S, X, Y}= P_{Y|X} P_{S, X}$ is defined.

The analyst is a legitimate recipient of data $Y$, which it can use to provide utility to the user, e.g. some personalized service. However, the analyst can also act as an adversary by using $Y$ to illegitimately infer private data $S$. The privacy mapping aims at balancing the tradeoff between utility and privacy: the privacy mapping should be designed to decrease the inference threat on private $S$ by reducing the dependency between $Y$ and $S$, while at the same time preserving the utility of $Y$, by maintaining the dependency between $Y$ and $X$.

%
%
%

\vspace{-.05in}
\subsection{Privacy Metric}

We consider the inference threat model introduced in \cite{du2012privacy}, in which the analyst performs an adversarial inference attack on the private data $S$. More precisely, the analyst selects a distribution $q$, from the set $\mathcal{P}_S$ of all probability distributions over $\mathcal{S}$, that minimizes an expected inference cost function $C(S, q)$. In other words, the analyst chooses in an adversarial way a belief distribution $q$ over the
private variables $S$ prior to observing $Y$, and a revised belief distribution as 
\vspace{-.02in}
\begin{equation}
q_0^* = \underset{q \in \mathcal{P}_S}{\argmin \;} \mathbb{E}_{P_S} [C(S,q)], \nonumber
\end{equation}
\vspace{-.02in}
prior to observing $Y$, and a revised belief distribution
\vspace{-.02in}
\begin{equation}
q_y^* = \underset{q \in \mathcal{P}_S}{\argmin \;} \mathbb{E}_{P_{S|Y}} [C(S,q)|Y=y], \nonumber
\end{equation}
\vspace{-.02in}
after observing $Y=y$.
This models a very broad class of adversaries that perform statistical inference.
Using the chosen belief distribution $q$, the analyst can produce an estimate
of the input $S$, e.g. using a Maximum a Posteriori (MAP) estimator.
Let $c_0^*$ and $c_y^*$ respectively denote the minimum average cost of inferring $S$ without observing $Y$, and after observing $Y=y$:
\begin{align*}
& c_0^* = \underset{q \in \mathcal{P}_S}{\min \;} \mathbb{E}_{P_S} [C(S,q)],\, \,  &
 c_y^* = \underset{q \in \mathcal{P}_S}{\min \;} \mathbb{E}_{P_{S|Y}} [C(S,q)|Y=y].
\end{align*}

Thanks to the observation of $Y$, the analyst obtains an average gain in inference cost of
$\Delta C = c_0^* - \mathbb{E}_{P_{Y}} [c_Y^*]$.
The average inference cost gain $\Delta C$ was proposed as a general privacy metric in \cite{du2012privacy}, as it measures the improvement in the quality of the inference of private data $S$ due to the observation of~$Y$.
The design of the privacy  mapping $P_{Y|X}$ should aim at reducing $\Delta C$, or in other words it should aim at bringing the inference cost given the observation of $Y$ closer to the initial inference cost $c_0^*$ without observing $Y$. 

\vspace{-.02in}
\subsection{Accuracy Metric}

The privacy mapping should maintain the utility of the distorted data $Y$. In the framework proposed in \cite{du2012privacy}, the utility requirement is modeled by a constraint on the average distortion $\EE_{P_{X,Y}}[d(X, Y)] \le D$, for some distortion measure $d: \mathcal{X} \times \mathcal{Y} \rightarrow \mathds{R}^+ $, and some distortion level $D \geq 0$. Assuming that the distortion measure $d$ is a function of $X$ and $Y$, but not of their statistical properties, the average distortion $\EE_{P_{X,Y}}[d(X, Y)]$ is linear in $P_{Y|X}$. Consequently, the distortion constraint is a linear constraint in $P_{Y|X}$.

\subsection{Privacy-Accuracy tradeoff}\label{sec:originalTradeoff}

The optimal privacy mapping for a given distortion level $D$ is obtained as the solution of the following optimization 
 \vspace{-.03in}
\begin{align}\label{Eq:optmain}
& \min_{P_{Y|X}~:~ \EE_{P_{X,Y}}[d(X, Y)] \le D} \Delta \, C
\end{align}
If $\Delta \, C$ is convex in $P_{Y|X}$, then optimization~\eqref{Eq:optmain} is a convex optimization, since the distortion constraint  $\EE_{P_{X,Y}}[d(X, Y)]$ is linear in $P_{Y|X}$. 

\section{The Privacy Funnel Method}\label{sec:Funnel}
In this section, we focus on the privacy-utility framework when the log-loss is used in both the privacy metric and in the distortion metric. We justify the relevance of the log-loss in such a framework, and characterize the resulting privacy-disclosure tradeoff as the Privacy Funnel optimization. Finally, we show how the Privacy Funnel is related to the Information Bottleneck \cite{tishby2000information}, and how algorithms developed for the latter can inform the design of algorithms for the former.

\subsection{Privacy metric under log-loss}

In this section, we focus on the threat model under the log-loss cost function. We first recall that, under this cost-function, the privacy leakage can be measured by the mutual information $I(S; Y)$ between the private variable $S$ and the variable $Y$. We then justify the relevance and the generality of the use of the log-loss in the threat model, by showing that the inference cost gain for any bounded cost function can be upperbounded by a function of the mutual information between $S$ and $Y$.

Under the log-loss cost function $C(s,q)= - \log q(s)$, $\forall s\in \mathcal{S}$, the privacy leakage can be measured by the mutual information $I(S; Y)$, as stated in the following lemma.
%
\begin{lemma}[\cite{du2012privacy}]\label{Lem:MutualInfoThreat}
The average inference cost gain under the log-loss cost function $C(s,q)= - \log q(s)$, is
the mutual information between $S$ and $Y$: $\Delta C= I(S; Y)$.
\end{lemma}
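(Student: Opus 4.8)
The plan is to evaluate the two quantities $c_0^*$ and $\mathbb{E}_{P_Y}[c_Y^*]$ separately by carrying out the inner minimizations explicitly, and to recognize that each reduces to an entropy via the nonnegativity of the Kullback--Leibler divergence (Gibbs' inequality). The whole argument rests on one elementary fact: for any reference distribution $p$ over $\mathcal{S}$, the cross-entropy $-\sum_s p(s) \log q(s)$ is minimized over $q \in \mathcal{P}_S$ at $q = p$, with minimal value $H(p)$. This is because the cross-entropy decomposes as $-\sum_s p(s)\log q(s) = H(p) + D(p \,\|\, q)$, and $D(p\,\|\,q) \ge 0$ with equality if and only if $q = p$.

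First I would handle the prior cost $c_0^*$. Substituting the log-loss $C(s,q) = -\log q(s)$ gives $\mathbb{E}_{P_S}[C(S,q)] = -\sum_s P_S(s)\log q(s)$. Applying the fact above with $p = P_S$, the adversary's optimal prior belief is $q_0^* = P_S$, and the minimal cost is $c_0^* = H(S)$.

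Next I would treat the posterior cost $c_y^*$ in exactly the same way, now conditioning on the observation $Y = y$. Here $\mathbb{E}_{P_{S|Y}}[C(S,q)\,|\,Y=y] = -\sum_s P_{S|Y}(s\,|\,y)\log q(s)$, so applying the fact with $p = P_{S|Y}(\cdot\,|\,y)$ yields the optimal revised belief $q_y^* = P_{S|Y}(\cdot\,|\,y)$ and minimal cost $c_y^* = H(S\,|\,Y=y)$. Taking the expectation over $Y$ then gives $\mathbb{E}_{P_Y}[c_Y^*] = \sum_y P_Y(y) H(S\,|\,Y=y) = H(S\,|\,Y)$, by the definition of conditional entropy.

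Finally I would assemble the pieces: $\Delta C = c_0^* - \mathbb{E}_{P_Y}[c_Y^*] = H(S) - H(S\,|\,Y) = I(S;Y)$, which is the claim. I do not expect a genuine obstacle here; the only point requiring care is justifying that the optimal belief distribution coincides with the true (conditional) distribution, which is precisely where Gibbs' inequality enters, and verifying that the minimizers lie in $\mathcal{P}_S$ (they do, being genuine probability distributions). The result is essentially the standard characterization of entropy as the optimal log-loss, applied once unconditionally and once conditionally.
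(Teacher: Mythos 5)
Your proposal is correct and follows essentially the same route as the paper's proof: both decompose the expected log-loss as cross-entropy $= H(p) + D(p\,\|\,q)$, invoke nonnegativity of the divergence to get $c_0^* = H(S)$ and $c_y^* = H(S\,|\,Y=y)$, and conclude $\Delta C = H(S) - H(S\,|\,Y) = I(S;Y)$. Your write-up is in fact slightly cleaner, since the paper's displayed equation conflates the minimization with the decomposition.
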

%
\begin{proof}
Let the cost function to be the log-loss defined by $C(s,q)= - \log q(s)$ for any $s\in \mathcal{S}$. Then,
\begin{equation}
 c_0^* = \underset{q \in \mathcal{P}_S}{\min \;} \mathbb{E}_{P_S} [-\log q(S)]= \mathbb{E}_{P_S} [-\log p(S)]+ D(p||q).
\end{equation}
Since $D(p||q) \ge 0$, with equality if $p = q$, we have $c_0^*= H(S)$. Similarly,  we have $c_y^*=H(S|Y=y)$. Therefore, $\Delta C = H(S) - \mathbb{E}_{P_{Y}} [H(S|Y=y)] = I(S; Y)$.
\end{proof}
We now justify the relevance and the generality of the use of the log-loss in the threat model. More precisely, in Theorem~\ref{Th:MIboundscost} below, we prove that for any bounded cost function $C(S,q)$, the associated inference cost gain $\Delta C$  can be upperbounded by an explicit constant factor of $\sqrt{I(S; Y)}$. Thus, controlling the cost gain under the log-loss, so that it does not exceed a target privacy level, is sufficient to ensure that the privacy threat under a different bounded cost function would also be controlled. Therefore, the design of the privacy mapping can be focused on minimizing the privacy leakage as measured by $I(S; Y)$.
\vspace{-.02in}
\begin{theorem}\label{Th:MIboundscost}
Let $L=\sup_{s\in \mathcal{S}, q\in \mathcal{P}_S} |C(s,q)|< \infty$. We have
 $\Delta C = c_0^* - \mathbb{E}_{P_Y} [c_Y^*] \leq 2 \sqrt{2} L \sqrt{ I(S;Y)}$.
\end{theorem}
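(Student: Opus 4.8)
The plan is to regard $c_0^*$ and $c_y^*$ as values of the single concave functional $F(\mu)=\min_{q\in\mathcal{P}_S}\mathbb{E}_{\mu}[C(S,q)]$ evaluated at the prior $P_S$ and at each posterior $P_{S|Y=y}$, so that $\Delta C=\mathbb{E}_{P_Y}\!\left[F(P_S)-F(P_{S|Y})\right]$. The heart of the argument is to show that this functional is Lipschitz with respect to the $\ell_1$ (total variation) distance on distributions, with Lipschitz constant controlled by $L$; once the per-symbol gap $c_0^*-c_y^*$ is bounded by a multiple of $\|P_S-P_{S|Y=y}\|_1$, I would convert total variation into mutual information by first applying Pinsker's inequality and then Jensen's inequality.

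First I would establish the key one-sided estimate. Let $q_y^*$ be the belief attaining $c_y^*=\mathbb{E}_{P_{S|Y=y}}[C(S,q_y^*)]$. Since $q_0^*$ is merely one feasible choice for the prior problem, optimality of $q_0^*$ gives $c_0^*=\mathbb{E}_{P_S}[C(S,q_0^*)]\le \mathbb{E}_{P_S}[C(S,q_y^*)]$, and subtracting $c_y^*$ yields
\[
0 \le c_0^*-c_y^* \le \mathbb{E}_{P_S}[C(S,q_y^*)]-\mathbb{E}_{P_{S|Y=y}}[C(S,q_y^*)] = \sum_{s\in\mathcal{S}}\left(P_S(s)-P_{S|Y=y}(s)\right)C(s,q_y^*).
\]
Because $\sum_{s}\left(P_S(s)-P_{S|Y=y}(s)\right)=0$ and $|C(s,q)|\le L$ by the definition of $L$, Hölder's inequality bounds the right-hand side by $L\,\|P_S-P_{S|Y=y}\|_1$. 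This is the step I expect to be the main obstacle: one must plug the posterior-optimal belief $q_y^*$ into the prior expectation (so that the optimality inequality $c_0^*\le\mathbb{E}_{P_S}[C(S,q_y^*)]$ can be exploited), and combine the uniform bound $|C|\le L$ with the zero-sum property of the difference of two distributions to turn a linear functional into a total-variation estimate.

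With the pointwise bound in hand, I would average over $Y$ to get $\Delta C \le L\,\mathbb{E}_{P_Y}\!\left[\|P_S-P_{S|Y}\|_1\right]$, then apply Pinsker's inequality in the form $\|P_S-P_{S|Y=y}\|_1 \le \sqrt{2\,D(P_{S|Y=y}\|P_S)}$, and finally use concavity of the square root (Jensen's inequality) to pull the expectation inside: $\mathbb{E}_{P_Y}\!\left[\sqrt{2\,D(P_{S|Y}\|P_S)}\right]\le\sqrt{2\,\mathbb{E}_{P_Y}\!\left[D(P_{S|Y}\|P_S)\right]}=\sqrt{2\,I(S;Y)}$, using the identity $I(S;Y)=\mathbb{E}_{P_Y}\!\left[D(P_{S|Y}\|P_S)\right]$. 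Chaining these inequalities gives $\Delta C \le \sqrt{2}\,L\sqrt{I(S;Y)}$, which lies within the claimed bound $2\sqrt{2}\,L\sqrt{I(S;Y)}$; carrying the constants more loosely through Pinsker (bounding the $\ell_1$ distance against $2\sqrt{2D}$) reproduces the stated constant exactly. The only places requiring care beyond routine manipulation are the correct orientation of the optimality inequalities in the first step and the direction of Jensen's inequality used to transfer the expectation through the concave square root.
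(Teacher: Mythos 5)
Your proof is correct and in fact establishes a bound stronger than the one stated. The overall skeleton is the same as the paper's (an optimality inequality to dispose of one term, a total-variation bound on a linear functional of $P_{S|Y=y}-P_S$, then Pinsker followed by Jensen for the concave square root), but the decomposition differs in one substantive detail. The paper's Lemma~\ref{Lem:costlogloss} bounds $\mathbb{E}_{P_{S|Y}}[C(S,q_0^*)-C(S,q_y^*)\,|\,Y=y]$ by writing $p(s|y)=(p(s|y)-p(s))+p(s)$, so the measure difference is paired with the \emph{difference} of two cost functions, which is only bounded by $2L$; you instead plug the posterior-optimal belief $q_y^*$ into the prior problem, so the measure difference is paired with a \emph{single} cost function bounded by $L$. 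This saves a factor of $2$ and yields $\Delta C \le \sqrt{2}\,L\sqrt{I(S;Y)}$, which implies the theorem's $2\sqrt{2}\,L\sqrt{I(S;Y)}$. One small inaccuracy: the intermediate claim $0\le c_0^*-c_y^*$ is false pointwise in $y$ (under log-loss, for instance, $H(S|Y=y)$ can exceed $H(S)$ for particular realizations $y$); only the averaged statement $\mathbb{E}_{P_Y}[c_0^*-c_Y^*]\ge 0$ holds, by concavity of $\mu\mapsto\min_{q}\mathbb{E}_{\mu}[C(S,q)]$ and Jensen. This is harmless here, since your chain only uses the upper bound $c_0^*-c_y^*\le \mathbb{E}_{P_S}[C(S,q_y^*)]-c_y^*$, but the lower bound should be dropped or stated only in expectation.
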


The proof of Theorem~\ref{Th:MIboundscost} requires  the following lemma.
\vspace{-.02in}
\begin{lemma}\label{Lem:costlogloss}
Let $C(s, q)$ be a bounded cost function such that $L=\sup_{s\in \mathcal{S}, q\in \mathcal{P}_S} |C(s,q)|< \infty$. For any given $y \in \mathcal{Y}$,  
\begin{align*}
\mathbb{E}_{P_{S|Y}} [ C(S,q_0^*) - C(S,q_y^*)|Y=y] \le 2 \sqrt{2} L \sqrt{ D(P_{S|Y=y}|| P_S)}. \nonumber
\end{align*}

\end{lemma}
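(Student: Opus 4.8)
The plan is to control the cost difference by the total variation distance between the posterior $P_{S|Y=y}$ and the prior $P_S$, and then invoke Pinsker's inequality to pass to the KL divergence $D(P_{S|Y=y}\|P_S)$. First I would introduce the shorthand $g(q)=\mathbb{E}_{P_S}[C(S,q)]$ and $f(q)=\mathbb{E}_{P_{S|Y}}[C(S,q)\mid Y=y]$, so that by definition $q_0^*=\argmin_q g(q)$ and $q_y^*=\argmin_q f(q)$. The quantity to be bounded is exactly $f(q_0^*)-f(q_y^*)$, which is nonnegative since $q_y^*$ minimizes $f$. The key algebraic step is the telescoping decomposition
\begin{equation}
f(q_0^*)-f(q_y^*)=\big[f(q_0^*)-g(q_0^*)\big]+\big[g(q_0^*)-g(q_y^*)\big]+\big[g(q_y^*)-f(q_y^*)\big]. \nonumber
\end{equation}
The middle bracket is $\le 0$ because $q_0^*$ minimizes $g$, so it may be discarded, leaving the difference bounded by the sum of the two outer brackets.

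Next I would bound each remaining bracket using only the boundedness of $C$. For any fixed $q$, one has $f(q)-g(q)=\sum_{s}\big(P_{S|Y}(s\mid y)-P_S(s)\big)\,C(s,q)$, and since $|C(s,q)|\le L$ uniformly, this is at most $L\sum_s|P_{S|Y}(s\mid y)-P_S(s)|$ in absolute value, i.e.\ $2L\,\|P_{S|Y=y}-P_S\|_{TV}$ where $\|P-Q\|_{TV}=\tfrac12\sum_s|P(s)-Q(s)|$. Applying this to $q=q_0^*$ and to $q=q_y^*$ and summing gives
\begin{equation}
f(q_0^*)-f(q_y^*)\le 4L\,\|P_{S|Y=y}-P_S\|_{TV}. \nonumber
\end{equation}
Finally I would apply Pinsker's inequality $\|P_{S|Y=y}-P_S\|_{TV}\le\sqrt{\tfrac12 D(P_{S|Y=y}\|P_S)}$, which converts the factor $4L$ into $4L/\sqrt{2}=2\sqrt{2}\,L$ and yields the claimed bound.

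I do not expect a serious obstacle here; the argument is short once the right decomposition is in place. The only genuinely nontrivial step is recognizing that the natural object to insert is the comparison objective $g$, so that the optimality of \emph{both} $q_0^*$ (for $g$) and $q_y^*$ (for $f$) can be exploited simultaneously: optimality kills the middle term, and boundedness plus Pinsker handle the two residual terms. The remaining manipulations are routine and require no assumptions on $C$ beyond the uniform bound $L$.
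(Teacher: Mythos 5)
Your proposal is correct and follows essentially the same route as the paper: the three-term telescoping through $g$ is just a repackaging of the paper's ``add and subtract $p(s)$'' step, the middle bracket is discarded by the optimality of $q_0^*$ exactly as in the paper, and the remaining terms are bounded by $4L\|P_{S|Y=y}-P_S\|_{TV}$ followed by Pinsker's inequality, yielding the same constant $2\sqrt{2}L$.
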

%
\begin{proof}
we have
\begin{align}
& \mathbb{E}_{P_{S|Y}} [ C(S,q_0^*) - C(S,q_y^*)|Y=y]\nonumber \\
& = \sum_{s} p(s|y) [ C(s, q_0^* ) - C(s, q_y^*) ] \nonumber\\
& = \sum_{s} (  p(s|y) - p(s)  + p(s) ) [ C(s, c_0^* ) - C(s, q_y^*) ] \nonumber \\
& = \sum_{s} ( p(s|y) - p(s) ) [C(s, q_0^* ) - C(s, q_y^*)  ] \nonumber\\
& + \sum_{s}p(s)[C(s, q_0^* ) - C(s, q_y^*) ]  \nonumber \\
& \le 2L \sum_{s}  |p(s|y) - p(s)|+ (\mathbb{E}_{P_S}[C(S, q_0^*)]- \mathbb{E}_{P_S}[C(S, q_y^*)] ), \nonumber\\
& \le 2L \sum_{s}  |p(s|y) - p(s)|, \nonumber\\
& = 4 L || P_{S|Y=y} - P_S||_{TV} \nonumber\\
& \le 4 L \sqrt{\frac{1}{2} D(P_{S|Y=y}|| P_S)} \nonumber ,
\end{align}
where we used that $C(s, q_0^* ) - C(s, q_y^*) \le 2L$ and  $ \mathbb{E}_{P_S}[C(S, q_0^*)]- \mathbb{E}_{P_S}[C(S, q_y^*)]  \le 0 $. And the last inequality follows from using Pinsker's inequality (where the log in the definition of divergence is natural log).
\end{proof}

We now prove Theorem~\ref{Th:MIboundscost}.
\begin{proof}[proof of Theorem~\ref{Th:MIboundscost}]
\noindent We have
\begin{align}
&\Delta C = \mathbb{E}_{P_S} [C(S,q_0^*)] - \mathbb{E}_{P_Y} \left[ \mathbb{E}_{P_{S|Y}} [C(S,q_y^*)|Y=y]\right] \nonumber \\
& = \mathbb{E}_{P_Y} \left[ \mathbb{E}_{P_{S|Y}} [C(S,q_0^*) - C(S,q_y^*)|Y=y ] \right]  \nonumber \\
& \le 2 \sqrt{2} L \mathbb{E}_{P_Y} \left[ D(P_{S|Y=y}|| P_S) \right] \le  2 \sqrt{2} L \sqrt{ I(S; Y) }, \nonumber
\end{align}
\vspace{-.02in}
where the last step follows from concavity of
square root function and the one before that follows from Lemma \ref{Lem:costlogloss}.
\end{proof}

\vspace{-.02in}
\subsection{Accuracy metric under log-loss}
\vspace{-.02in}

Consider the log-loss distortion defined as  $d(x, y)= - \log {P(X=x|Y=y)}$, which is a function of $x$ and $y$ as well as $P_{Y|X}$. Using log-loss, the average distortion is $\EE[d(X, Y)] = \EE_{P_{X, Y}}[-\log P_{X|Y}]   = H(X|Y)$ that can be minimized by designing the mapping $P_{Y|X}$.
Thus, the constraint $\mathbb{E}[d(X, Y)] \le D$ would be $H(X|Y) \le D$ for a given distortion level, $D$. Given $P_X$, and therefore $H(X)$, and assuming that $R= H(X)-D$, the distortion constraint can be rewritten as $I(X; Y) \ge R$, that is the same as the constraint of \eqref{eq:optdkl}. It should be noted that the average distortion under the log-loss is not linear in $P_{Y|X}$.

For a given $P_{SX}$ and $P_{Y|X}$, where $S\to X\to Y$, we define the \emph{disclosure} to be the mutual information between $X$ and~$Y$.

\vspace{-.02in}
\subsection{Privacy-Disclosure Trade-off}\label{sec:tradeoff}



There is a trade-off between the information that user shares about $X$ and the information that user keeps private about $S$. We pass $X$ through a randomized mapping $P_{Y|X}$ and reveal $Y$ to the analyst. The purpose of this mapping is to make $Y$ informative about $X$ and to make $Y$ uninformative about $S$. Given $P_{SX}$, we design the privacy-mapping $P_{Y|X}$ to maximize the amount of information $I(X; Y)$ that user disclose about the public information, $X$, while  minimizing the  collateral information about the private variable $S$ measured by $I(S; Y)$.


 The trade-off between disclosure and privacy in the design of the privacy mapping is represented by the following optimization, that we refer to as the {\em Privacy Funnel}:
\vspace{-.05in}
\begin{align} \label{eq:optdkl}
&\min_{P_{Y|X}: ~ I(X; Y) \ge  R} I(S; Y).
\end{align}
For a given disclosure level $R$, among all feasible privacy mappings $P_{Y|X}$ satisfying $I(X; Y)\ge R$, the privacy funnel  selects the one that minimizes $I(S; Y)$.
%
Note that $I(X; Y)$ is convex in $P_{Y|X}$ and  since $P_{Y|S}$ is linear in $P_{Y|X}$ and
$I(S;Y)$ is convex in $P_{Y|S}$, the objective function $I(S;Y)$ is convex in $P_{Y|X}$. 
However, because of the constraint $I(X; Y)\ge R$, the Privacy Funnel~\eqref{eq:optdkl} is not a convex optimization \cite[Chap.~4]{boyd2004convex}.

\subsection{Connection to the Information Bottleneck Method}

The information bottleneck method, introduced in \cite{tishby2000information}, considers the setting where a variable $X$ is to be compressed, while maintaining the information it bears about another correlated variable $S$. The information bottleneck method is a technique generalizing rate-distortion, as it seeks to optimize the tradeoff between the compression length of $X$ and the accuracy of the information preserved about $S$ in the compressed output $Y$. The information bottleneck optimization  \cite{tishby2000information} is
\vspace{-.03in}
\begin{equation}\label{eq:bottleneckOptim}
\min_{{P_{Y|X}:~ I(S; Y) \ge C}} I(X; Y)
\end{equation}
\vspace{-.03in}
for some constant $C$.
In the information bottleneck, the compression mapping $P_{Y|X}$ is designed to make $X$ and $Y$ as far as possible from each other (minimizes $I(X; Y)$) while guaranteeing that $S$ and $Y$ are close to each other. In other words, in the information botteleneck the mapping $P_{Y|S}$ is designed to make $I(S; Y)$ large and $I(X; Y)$ small. The information bottleneck optimization~(\ref{eq:bottleneckOptim}) bears some resemblance to the privacy funnel~(\ref{eq:optdkl}), but is actually the opposite optimization. Indeed, in the privacy funnel, the privacy mapping is designed to make $I(S; Y)$ small and $I(X; Y)$ large. 

Several techniques were developed to solve the information bottleneck problem such as alternating iteration  \cite{tishby2000information} and agglomerative information bottleneck \cite{aggloinfobottleneck}. A question we examined is whether algorithms developed to solve the information bottleneck optimization could be adapted to solve the privacy funnel optimization.
The alternating iteration algorithm \cite{tishby2000information} finds a  stationary point of the Lagrangian of information bottleneck optimization~(\ref{eq:bottleneckOptim}) defined as $\mathcal{L}=I(X; Y)- \beta I(S; Y)$ for some $\beta$. The stationary point can be a local minimum,  which addresses the information bottleneck, or a local maximum  in which case it addresses the privacy funnel. However, there is no guarantee on the convergence of this alternating algorithm to either a local minimum or a local maximum. Thus, we developed a new greedy algorithm that is guaranteed to converge to a solution of the privacy funnel, which is the object of Section~\ref{sec:alg}. 





\section{Algorithm for the Privacy Funnel}\label{sec:alg}

\vspace{-.05in}
We showed that the privacy funnel~\eqref{eq:optdkl} optimization is not a convex optimization. In this section, we provide a greedy algorithm to solve this optimization and we evaluate it
on real-world data.
\subsection{Greedy Algorithm} Suppose the constraint $I(X; Y) \ge R$ is given for some $R \le H(X)$. We wish to find $P_{Y|X}$ that minimizes $I(S; Y)$. Note that for $\mathcal{Y}=\mathcal{X}$ and $P_{Y|X}(y|x)=\mathbf{1}\{x=y\}$ (where $\mathbf{1}\{x=y\}=1$ if and only if $x=y$), the condition
$I(X; Y) \ge R$ is satisfied because $I(X; Y)=H(X) \ge R$. However, $I(S; Y)$ might be too large. The idea is to merge two elements of
$\mathcal{Y}$ to make $I(S; Y)$ smaller, while satisfying $I(X; Y) \ge R$. This method is motivated by agglomerative
 information method introduced in \cite{aggloinfobottleneck}. We merge $y_i$ and $ y_j$ and denote the merged
 element by ${y_{ij}}$. We then update $P_{Y|X}$ as
$ p(y_{ij}|x)= p(y_i|x) + p(y_j|x), ~ \text{ for all } x\in \mathcal{X}$.
After merging, we also have
$p(y_{ij})= p(y_i)+ p(y_j)$.
Consider the row stochastic matrix $P$ as $P_{x, y}=P_{Y|X}(y|x)$ for all $x\in \mathcal{X}$ and all $y\in \mathcal{Y}$. In Algorithm (1) we start with $P$  as an identity matrix and then at each iteration we delete two columns of $P$  (corresponding to $y_i$ and $y_j$) and add their summation as a new column (corresponding to $y_{ij}$) to $P$. Thus, the resulting matrix  at the end contains only zeros and ones, determining all $x \in \mathcal{X}$ and all $y \in \mathcal{Y}$ such that $P_{Y|X}(y|x)=1$. Let $Y^{i-j}$ be the resulting $Y$ from merging $y_i$ and $y_j$. Algorithm $(1)$ is a greedy algorithm that uses this idea in order to solve optimization  \eqref{eq:optdkl}. One need to calculate $I(S; Y) - I(S; Y^{i-j})$ and $I(X; Y) - I(X; Y^{i-j})$ at each iteration of Algorithm (1). Proposition \ref{Pro:JSD} shows an efficient way to calculate them.

\begin{algorithm}[t] \label{alg:greedy}
\caption{Greedy algorithm-privacy funnel}
\begin{algorithmic}
\STATE \textbf{Input:} $R$, $P_{S, X}$
\STATE \textbf{Initialization: } $\mathcal{Y} = \mathcal{X}$, $P_{Y|X}(y|x)=\mathbf{1}\{y=x\}$.
\WHILE{ there exists $i', j'$ such that $I(X; Y^{i'-j'}) \ge R$}

        \STATE among those $i', j'$, let \\
         $\{y_i, y_j\}= \argmax_{y_{i'}, y_{j'} \in \mathcal{Y}}  I(S; Y) - I(S; Y^{i'-j'}) $
        \STATE    \textbf{merge}: $\{y_i, y_j\} \rightarrow y_{ij}$
    \STATE    \textbf{update}: $\mathcal{Y}= \{\mathcal{Y}\setminus \{y_i, y_j\} \}\cup \{y_{ij}\} $  and $P_{Y|X}$
\ENDWHILE
\STATE \textbf{Output:} $P_{Y|X}$
\end{algorithmic}
\vspace{-.06in}
\end{algorithm}
\vspace{-.05in}

\vspace{-.02in}
\begin{proposition}\label{Pro:JSD}
For a given joint distribution $P_{S, X, Y}= P_{S, X} P_{Y|X}$, we have $I(S; Y)-I(S; Y^{i-j})=$
\begin{align}
& p(y_{ij})
 H\left(\frac{p(y_i)P_{S|Y=y_j}+p(y_j)P_{S|Y=y_j}}{p(y_{ij})} \right) \nonumber\\
& - \left( p(y_i) H(P_{S|Y=y_i})+ p(y_j) H(P_{S|Y=y_j}) \right). \nonumber
\end{align}
We also have $I(X; Y)-I(X; Y^{i-j})=$
\begin{align}
& p(y_{ij})
 H\left(\frac{p(y_i)P_{X|Y=y_j}+p(y_j)P_{X|Y=y_j}}{p(y_{ij})} \right) \nonumber\\
& - \left( p(y_i) H(P_{X|Y=y_i})+ p(y_j) H(P_{X|Y=y_j}) \right). \nonumber
\end{align}

\end{proposition}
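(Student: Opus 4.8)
The plan is to reduce each mutual-information difference to a difference of conditional entropies, exploiting the fact that the merge leaves the marginal of $S$ (respectively $X$) untouched. Writing $I(S;Y)=H(S)-H(S|Y)$ and $I(S;Y^{i-j})=H(S)-H(S|Y^{i-j})$, I first observe that the merge rule $p(y_{ij}|x)=p(y_i|x)+p(y_j|x)$ together with the Markov structure $p(s,x,y)=p(s,x)p(y|x)$ gives $p(s,y_{ij})=p(s,y_i)+p(s,y_j)$ for every $s$. Summing over $y$ then shows that $P_S$ is unchanged by the merge, so the $H(S)$ terms cancel and
\[
I(S;Y)-I(S;Y^{i-j}) = H(S|Y^{i-j}) - H(S|Y).
\]

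Next I would use the decomposition $H(S|Y)=\sum_{y} p(y)\,H(P_{S|Y=y})$. Since the merge replaces the two symbols $y_i,y_j$ by the single symbol $y_{ij}$ while leaving every other symbol and its posterior intact, all terms in the two sums cancel except those indexed by $y_i,y_j$ (in $H(S|Y)$) and $y_{ij}$ (in $H(S|Y^{i-j})$). Hence the difference collapses to
\[
p(y_{ij})\,H(P_{S|Y=y_{ij}}) - \big(p(y_i)\,H(P_{S|Y=y_i}) + p(y_j)\,H(P_{S|Y=y_j})\big).
\]

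It then remains to identify the merged posterior. From $p(s,y_{ij})=p(s,y_i)+p(s,y_j)$ and $p(y_{ij})=p(y_i)+p(y_j)$, Bayes' rule yields
\[
P_{S|Y=y_{ij}} = \frac{p(y_i)\,P_{S|Y=y_i} + p(y_j)\,P_{S|Y=y_j}}{p(y_{ij})},
\]
the $(p(y_i),p(y_j))$-weighted mixture of the two original posteriors. Substituting this into the previous display gives exactly the claimed expression. The derivation for $I(X;Y)-I(X;Y^{i-j})$ is verbatim the same with $X$ in place of $S$, using that $p(x,y_{ij})=p(x,y_i)+p(x,y_j)$ follows directly from the merge rule.

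A remark on what is and is not delicate: there is no genuine analytic obstacle here, as the result is a bookkeeping identity; indeed the per-merge change equals a weighted Jensen--Shannon divergence $p(y_{ij})\,\mathrm{JS}(P_{S|Y=y_i},P_{S|Y=y_j})$, which is nonnegative and confirms that merging cannot increase mutual information (consistent with data processing). The only steps requiring care are verifying that the marginals $P_S$ and $P_X$ are invariant under the merge, so that the unconditional entropy terms cancel, and correctly deriving the mixture form of the merged posterior; both follow mechanically from the additivity of the joint probabilities under merging.
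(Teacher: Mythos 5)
Your proof is correct and follows essentially the same route as the paper's: reduce the mutual-information difference to $H(S|Y^{i-j})-H(S|Y)$, note that only the terms at $y_i$, $y_j$, $y_{ij}$ survive, and identify the merged posterior as the $(p(y_i),p(y_j))$-weighted mixture of the original posteriors --- the paper's proof is simply a terser version of the same bookkeeping, and your Jensen--Shannon remark matches the proposition's label. Incidentally, your derivation also implicitly corrects a typo in the statement, where the first term of the mixture should read $P_{S|Y=y_i}$ rather than $P_{S|Y=y_j}$.
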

\begin{proof}
 After merging $y_i$ and $y_j$, we obtain
\begin{align}
&p(s|y_{ij})= \frac{p(y_i)}{p(y_{ij})} p(s|y_i) + \frac{p(y_j)}{p(y_{ij})} p(s| y_j), \text{ for all } s\in \mathcal{S}, \nonumber \\
&p(x|y_{ij})= \frac{p(y_i)}{p(y_{ij})} p(x|y_i) + \frac{p(y_j)}{p(y_{ij})} p(x| y_j), \text{ for all } x\in \mathcal{X}. \nonumber
\end{align}
 The proof follows from writing $I(S; Y)-I(S; Y^{i-j})= H(S|Y^{i-j})- H(S|Y)$ and
$I(X; Y)-I(X; Y^{i-j})= H(X|Y^{i-j})- H(X|Y)$.
\end{proof}
Proposition \ref{Pro:JSD} shows that the difference in the mutual information after
merging changes only if the new variable, $y_{ij}$,  is involved.
 The greedy algorithm is locally optimal at every step since we minimize $I(S; Y)$. However,
there is no guarantee that such a greedy algorithm induces a global optimal privacy mapping.
\vspace{-.01in}
\begin{algorithm}[t] \label{alg:greedymax}
\caption{Greedy algorithm-information bottleneck}
\begin{algorithmic}
\STATE \textbf{Input:} $\Delta$, $P_{S, X}$
\STATE \textbf{Initialization: } $\mathcal{Y} = \mathcal{X}$, $P_{Y|X}(y|x)=\mathbf{1}\{y=x\}$
\WHILE{ there exists $i', j'$ such that $I(S; Y^{i'-j'}) \ge \Delta$}
 \STATE   among those $i', j'$, let \\ $\{y_i, y_j\}= \argmax_{y_{i'}, y_{j'} \in \mathcal{Y}}  I(X; Y) - I(X; Y^{i'-j'}) $
    \STATE    \textbf{merge}: $\{y_i, y_j\} \rightarrow y_{ij}$
    \STATE    \textbf{update}: $\mathcal{Y}= \{\mathcal{Y}\setminus \{y_i, y_j\} \}\cup \{y_{ij}\}$  and $P_{Y|X}$
\ENDWHILE
\STATE \textbf{Output:} $P_{Y|X}$
\end{algorithmic}
\vspace{-.06in}
\end{algorithm}
\vspace{-.05in}
\begin{note}
\textup{
The minimum of $I(S; Y)$ in \eqref{eq:optdkl} is a decreasing function of $I(X; Y)$ and is achieved for a mapping  $P_{Y|X}$ that satisfies $I(X; Y)=R$ (if possible due to discrete alphabets). For a given mutual  information, $R$, there are many conditional
probability distributions, $P_{Y|X}$, achieving  $I(X; Y)=R$.
Among which there is one that gives the minimum
$I(S; Y)$ and one that gives the maximum $I(S; Y)$.
We can modify the greedy algorithm so that it converges to a local maximum of $I(S; Y)$ for a given $I(X; Y)=R$.  The algorithm which we call \emph{greedy algorithm-information bottleneck} is given in Algorithm (2). Algorithm (1) and Algorithm (2) allow us to approximately characterize the range of values $I(S; Y)$ can take for a given value of $I(X; Y)$ as being those between the local minimum and the local maximum. Interestingly, by observing the gap between the local maximum and the local minimum, we have a relative idea on the effectiveness of the Greedy algorithm, i.e., if the difference is significant it means a negligent mapping may lie anywhere between those values, possibly leading to a much higher privacy threat.
}
\end{note}
\vspace{-.02in}
\subsection{Data Set}
 The US 1994 Census dataset \cite{Asuncion+Newman:2007} is a well-known dataset in the machine learning community, which is a sample of the US population from 1994. For each of the entries, it contains features such age, work-class, education, gender, and native country, as well as an income category. The income level is a binary variable which determines whether  the income is above or below USD 50000, gender is a binary variable, education level is a variable with four categories, age is a variable divided into seven categories. For our purposes, we consider the private attributes $S=(\text{age, income level})$ and the attributes to be released as $X=(\text{age, gender, education level})$.  The goal of the privacy mapping is to release a modified version of attributes $Y$ which is informative about $X$ but that renders the inference of $S$ based on $Y$ hard. 
%
\subsection{Numerical Results}
 In Fig.~\ref{ine3fig}, we plot the minimum and maximum of $I(S; Y)$ for a given $I(X; Y)$. This figure is based on US 1994 census data set described before. The top curve shows the maximum of $I(S; Y)$ versus $I(X; Y)$, using Algorithm (2). The bottom  curve shows the minimum of $I(S; Y)$ versus $I(X; Y)$, using Algorithm (1). The area  between the two curves shows the 
possible pairs of $(I(X; Y), I(S; Y))$ as $P_{Y|X}$  varies (a subset of possible pairs, since the algorithms are sub-optimal). Indeed, we will design the mapping to lie on the bottom curve. For  a given $R$, if we design the mapping negligently, we may have $I(S; Y)$ on the top curve  instead of the bottom curve.



\begin{figure}[t]
\centering
\includegraphics[width=0.45 \textwidth]{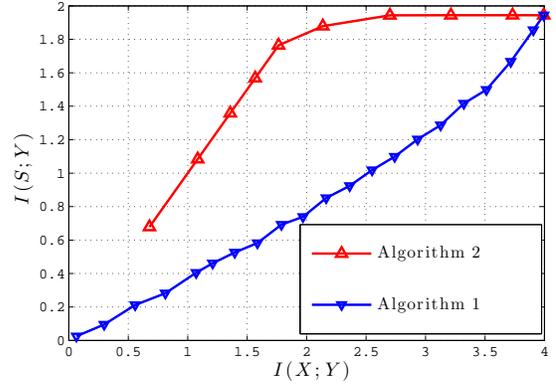}
\caption{Maximum and minimum of $I(S; Y)$ for a given $I(X; Y)$.}
\label{ine3fig}
\end{figure}
\vspace{-.03in}

\section{Conclusions}\label{sec:conclusion}
We study the privacy-utility trade-off against inference attacks when the log-loss is used both in the privacy and utility metrics. We justify the generality of the privacy threat under the log-loss by proving that the threat under any bounded cost inference function can be upperbounded by an explicit function of the mutual information between private and disclosed data. We cast the tradeoff under the log-loss as the Privacy Funnel optimization, which is non-convex. We leverage its connection to the Information Bottleneck to design a locally-optimal greedy algorithm, that we evaluate on the US census dataset.
\section*{Acknowledgement}
The authors are grateful to Prof. Thomas Courtade, Prof. Kave Salamatian, and Prof. Tsachy Weissman for encouraging them to study the connections between privacy and the information bottleneck.

\bibliographystyle{./biblio/IEEEtran}
\bibliography{./biblio/IEEEabrv,references}

\end{document}